\newtheorem{theorem}{Theorem}
\newcommand{\ie}{\textit{i.e.}}
\newcommand{\newt}[1]{\textcolor{black}{#1}}
\begin{document}

\title{Adaptive Read Thresholds for NAND Flash}
%
%
%
\author{
    \IEEEauthorblockN{Borja Peleato\IEEEauthorrefmark{1}, Rajiv Agarwal\IEEEauthorrefmark{2}, John Cioffi\IEEEauthorrefmark{2}, Minghai Qin\IEEEauthorrefmark{3}, Paul H. Siegel\IEEEauthorrefmark{3}}\\
    \IEEEauthorblockA{\IEEEauthorrefmark{1}Purdue University}
    \IEEEauthorblockA{\IEEEauthorrefmark{2}Stanford University}
    \IEEEauthorblockA{\IEEEauthorrefmark{3}UCSD}

    \thanks{\copyright2015 IEEE.  Personal use of this material is permitted.  Permission from IEEE must be obtained for all other uses, in any current or future media, including reprinting/republishing this material for advertising or promotional purposes, creating new collective works, for resale or redistribution to servers or lists, or reuse of any copyrighted component of this work in other works. DOI: 10.1109/TCOMM.2015.2453413}
}

\maketitle
\vspace{-2cm}
\begin{abstract}

A primary source of increased read time on NAND flash comes from the fact that in the presence of noise, the flash medium must be read several times using different read threshold voltages for the decoder to succeed. This paper proposes an algorithm that uses a limited number of re-reads to characterize the noise distribution and recover the stored information. Both hard and soft decoding are considered. For hard decoding, the paper attempts to find a read threshold minimizing bit-error-rate (BER) and derives an expression for the resulting codeword-error-rate. For soft decoding, it shows that minimizing BER and minimizing codeword-error-rate are competing objectives in the presence of a limited number of allowed re-reads, and proposes a trade-off between the two.

The proposed method does not require any prior knowledge about the noise distribution, but can take advantage of such information when it is available. Each read threshold is chosen based on the results of previous reads, following an optimal policy derived through a dynamic programming backward recursion. The method and results are studied from the perspective of an SLC Flash memory with Gaussian noise for each level but the paper explains how the method could be extended to other scenarios.
\end{abstract}

%

\IEEEpeerreviewmaketitle

\section{Introduction}
\label{s-intro}

\subsection{Overview}

The introduction of Solid
State Drives (SSD) based on NAND flash memories has
revolutionized mobile, laptop,
and enterprise storage by offering random access to the information with dramatically higher read throughput and power-efficiency than hard disk drives. However, SSD's are considerably more expensive, which poses an obstacle to their widespread use. NAND flash manufacturers have tried to pack more data in the same silicon area by scaling the size of the flash cells and storing more bits in each of them, thus reducing the cost per gigabyte (GB) and making flash more attractive to consumers, but this cell-size shrinkage has come at the cost of reduced performance. As cell-size shrinks to sub-16nm limits, noise can cause the voltage residing on the cell at read time to be significantly different from the voltage that was intended to be stored at the time of write. Even in current state-of-the-art 19nm NAND, noise is significant towards the end of life of the drive. One way to recover host data in the presence of noise is to use advanced signal processing algorithms~\cite{peleato2012maximizing},~\cite{zhou2011error},~\cite{peleato2012probabilistic},~\cite{asadi2014optimal}, but excessive re-reads and post-read signal processing could jeopardize the advantages brought by this technology.

Typically, all post-read signal processing algorithms require re-reads using different thresholds, but the default read thresholds, which are good for voltage levels intended during write, are often suboptimal for read-back of host data. Furthermore, the noise in the stored voltages is random and depends on several factors such as time, data, and temperature; so a fixed set of read thresholds will not be optimal throughout the entire life of the drive. Thus, finding optimal read thresholds in a dynamic manner to minimize BER and speed up the post-processing is essential.

The first half of the paper proposes an algorithm for characterizing the distribution of the noise for each nominal voltage level and estimating the read thresholds which minimize BER. It also presents an analytical expression relating the BER found using the proposed methods to the minimum possible BER. Though BER is a useful metric for algebraic error correction codes, the distribution of the number of errors is also important. Some flash memory controllers use a weaker decoder when the number of errors is small and switch to a stronger one when the former fails, both for the same code (e.g. bit-flipping and min-sum for decoding an LDPC code  \newt{\cite{anholt2013dual}}).
The average read throughput and total power consumption depends on how frequently each decoder is used. Therefore, the distribution of the number of errors, which is also derived here, is a useful tool to find NAND power consumption.

The second half of the paper modifies the proposed algorithm to address the quality of the soft information generated, instead of just the number of errors. In some cases, the BER is too large for a hard decoder to succeed, even if the read is done at the optimal threshold. It is then necessary to generate soft information by performing multiple reads with different read thresholds. The choice of read thresholds has a direct impact on the quality of the soft information generated, which in turn dictates the number of decoder iterations and the number of re-reads required. The paper models the flash as a discrete memoryless channel with mismatched decoding and attempts to maximize its capacity through dynamic programming.

\newt{The overall scheme will work as follows. First, the controller will read with an initial threshold and attempt a hard-decoding of the information. If the noise is weak and the initial threshold was well chosen, this decoding will succeed and no further processing will be needed. Otherwise, when this first decoding fails, the controller will perform additional reads with adaptively chosen thresholds to estimate the mean and/or variance of the voltage values for each level. These estimates will in turn be used to estimate the minimum feasible BER and the corresponding optimal read threshold. The flash controller then decides whether to perform an additional read with that estimated threshold to attempt hard decoding again, or directly attempt a more robust decoding of the information, leveraging the reads already performed to generate soft information.} 

\subsection{Literature review}
\label{ss-literatureReview}

Most of the existing literature on optimizing the read thresholds for NAND flash assumes that prior information on the noise is available \newt{(e.g., \cite{dong2011use},\cite{sala2013dynamic},\cite{cai2012error},\cite{li2014noise},\cite{cai2013threshold})}. Some methods, such at the one proposed by Wang et al. in~\cite{wang2011soft}, assume complete knowledge of the noise and choose the read thresholds so as to maximize the mutual information between the values written and read, while others attempt to predict the noise from the number of program-erase (PE) cycles and then optimize the read thresholds based on that prediction. An example of the latter was proposed by Cai et al. in~\cite{cai2013program}. \newt{Other references addressing threshold selection and error-correction codes are \cite{gabrys2014coding} and \cite{gabrys2013graded}}.

However, in some practical cases there is no prior information available, or the prior information is not accurate enough to build a reliable noise model. In these situations, a common approach is to perform several reads with different thresholds searching for the one that returns an equal number of cells on either side, \ie, the median between the two levels\footnote{In many cases this threshold is not explicitly identified as the median cell voltage, but only implicitly as the solution of $\frac{t-\mu_1}{\sigma_1}=\frac{t-\mu_2}{\sigma_2}$, where $(\mu_1,\sigma_1)$ and $(\mu_2,\sigma_2)$ are the mean and standard deviation of the level voltages.}. However, the median threshold is suboptimal in general, as was shown in~\cite{peleato2012maximizing}. In~\cite{zhou2011error}~and~\cite{zhou2011nonuniform} Zhou et al. proposed encoding the data using balanced, asymmetric, or Berger codes to facilitate the threshold selection. Balanced codes guarantee that all codewords have the same number of ones and zeros, hence narrowing the gap between the median and optimal thresholds. Asymmetric and Berger codes, first described in~\cite{berger1961note}, leverage the known asymmetry of the channel to tolerate suboptimal thresholds. Berger codes are able to detect {\bf any} number of unidirectional errors. In cases of significant leakage, where all the cells reduce their voltage level, it is possible to perform several reads with progressively decreasing thresholds until the Berger code detects a low enough number of errors, and only then attempt decoding to recover the host information.

Researchers have also proposed some innovative data representation schemes with different requirements in terms of read thresholds. For example, rank modulation~\cite{jiang2008error},\cite{jiang2009rank},\newt{\cite{en2011compressed},\cite{li2012compressed},\cite{gad2013rank}} stores information in the relative voltages between the cells instead of using pre-defined voltage levels. \newt{The strategy of writing data represented by rank modulation in parallel to flash memories is studied in~\cite{qin2013parallel}.} Theoretically, rank modulation does not require actual read thresholds, but just comparisons between the cell voltages. Unfortunately, there are a few technological challenges that need to be overcome before rank modulation becomes practical. Other examples include constrained codes~\cite{qin2014constrained},\newt{\cite{kayser2014constructions}; write-once memories codes~\cite{gabrys2015constructions},\cite{yaakobi2012multiple},\cite{gabrys2015constructions},\cite{bhatia2014lattice}; and other rewriting codes~\cite{li2013polar}. All these codes} impose restrictions on the levels that can be used during a specific write operation. Since read thresholds need only separate the levels being used, they can often take advantage of these restrictions.

The scheme proposed in this paper is similar to those described in~\cite{papandreou2014using}~and~\cite{lee2013estimation} in that it assumes no prior information about the noise or data representation, but it is significantly simpler and more efficient.  We propose using a small number of reads chosen by a dynamic program to simultaneously estimate the noise and recover the information, instead of periodically testing multiple thresholds (as in~\cite{papandreou2014using}) or running a computationally intensive optimization algorithm to perfect the model (as in~\cite{lee2013estimation}). A prior version of this paper was published in~\cite{peleato2012towards}, but the work presented here has been significantly extended with a bound on the capacity for the soft decoding case and a dynamic programming method for optimizing the read thresholds.

\section{System model}
\label{s-sysmodel}

Cells in a NAND flash are organized in terms of pages, which are the smallest units for write and read operations. Writing the cells in a page is done through a program and verify approach where voltage pulses are sent into the cells until their stored voltage exceeds the desired one. Once a cell has reached its desired voltage, it is inhibited from receiving subsequent pulses and the programming of the other cells in the page continues. However, the inhibition mechanism is non-ideal and future pulses may increase the voltage of the cell~\cite{cai2013program}, creating write noise.
The other two main sources of noise are inter-cell interference (ICI), caused by interaction between neighboring cells~\cite{dong2010using}, and charges leaking out of the cells with time and heat~\cite{torsi2011program}.

Some attempts have been made to model these sources of noise as a function of time, voltage levels, amplitude of the programming pulses, etc. Unfortunately, the noise is temperature- and page-dependent as well as time- and data-dependent~\cite{yaakobi2010error}. Since the controller cannot measure those factors, it cannot accurately estimate the noise without performing additional reads. This paper assumes that the overall noise follows a Gaussian distribution for each level, as is common in the literature, but assumes no prior knowledge about their means or variances. Section~\ref{s-Extension} will explain how the same idea can be used when the noise is not Gaussian.

Reading the cells in a page is done by comparing their stored voltage with a threshold voltage $t$. The read operation returns a binary vector with one bit for each cell. Bits corresponding to cells with voltage lower than $t$ are 1 and those corresponding to cells with voltage higher than $t$ are 0. However, the aforementioned sources of voltage disturbance can cause some cells to be misclassified, introducing errors in the bit values read. The choice of a read threshold therefore becomes important to minimize the BER in the reads.

In a $b$-bit MLC flash, each cell stores one of $2^b$ distinct predefined voltage levels. When each cell stores multiple bits, i.e. $b\geq 2$, the mapping of information bits to voltage levels is done using Gray coding to ensure that only one bit changes between adjacent levels. Since errors almost always happen between adjacent levels, Gray coding minimizes the average BER. Furthermore, each of the $b$ bits is assigned to a different page, as shown in Fig.~\ref{f-grayMapping}. This is done so as to reduce the number of comparisons required to read a page. For example, the lower page of a TLC ($b=3$) flash can be read by comparing the cell voltages with a single read threshold located between the fourth and fifth levels, denoted by D in Fig.~\ref{f-grayMapping}. The first four levels encode a bit value 1 for the lower page, while the last four levels encode a value 0. Unfortunately, reading the middle and upper pages require comparing the cell voltages with more read thresholds: two \newt{(B,F)} for the middle page and four \newt{(A,C,E,G)} for the upper page.
\begin{figure}[htb]
\begin{minipage}[b]{1.0\linewidth}
\centering
\centerline{\epsfig{figure=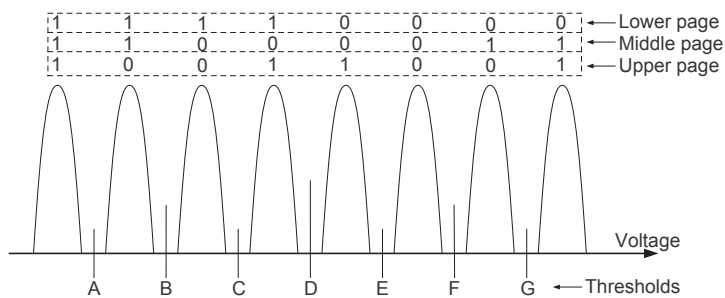,width=.6\linewidth}}
\end{minipage}
\caption{Typical mapping of bits to pages and voltage levels in a TLC Flash memory.}
\label{f-grayMapping}
\end{figure}

Upper pages take longer to read than lower ones, but the difference is not as large as it might seem. Flash chips generally incorporate dedicated hardware for performing all the comparisons required to read upper pages, without the additional overhead that would arise from issuing multiple independent read requests. The flash controller can then be oblivious to the type of page being read. Read commands only need to specify the page being read and a scalar parameter representing the desired shift in the read thresholds from their default value. If the page is a lower one, which employs only one threshold, the scalar parameter is understood as the amount by which this threshold needs to be shifted. If the page is an upper one, which employs multiple read thresholds, their shifts are parameterized by the scalar parameter. For example, a parameter value of $\Delta$ when reading the middle page in Fig.~\ref{f-grayMapping} could shift thresholds $B$ and $F$ by $\Delta$ and $-\frac{3}{2}\Delta$ mV, respectively. Then, cells whose voltage falls between the shifted thresholds $B$ and $F$ would be read as $0$ and the rest as $1$.

After fixing this parametrization, the flash controller views all the pages in an MLC or TLC memory as independent SLC pages with a single read shift parameter that needs to be optimized. In theory, each low level threshold could be independently optimized, but the large number of reads and amount of memory required would render that approach impractical. Hence, most of the paper will assume a SLC architecture for the flash and Section~\ref{s-Extension} will show how the same method and results can be readily extended to memories with more bits per cell.

Figure~\ref{f-thresholds} (a) shows two overlapping Gaussian probability density functions (pdfs), corresponding to the two voltage levels to which cells can be programmed. Since data is generally compressed before being written onto flash, approximately the same number of cells is programmed to each level. The figure also includes three possible read thresholds. Denoting by $(\mu_1,\sigma_1)$ and $(\mu_2,\sigma_2)$ the means and standard deviations of the two Gaussian distributions, the thresholds are: $t_\mathrm{mean}=\frac{\mu_1+\mu_2}{2}$, $t_\mathrm{median}=\frac{\mu_1\sigma_2+\mu_2\sigma_1}{\sigma_1+\sigma_2}$, and $t^\star$, which minimizes BER. If the noise variance was the same for both levels all three thresholds would be equal, but this is not the case in practice. The plot legend provides the BER obtained when reading with each of the three thresholds.
\begin{figure}[htb]
\centering
\begin{tabular}{cc}
  \epsfig{figure=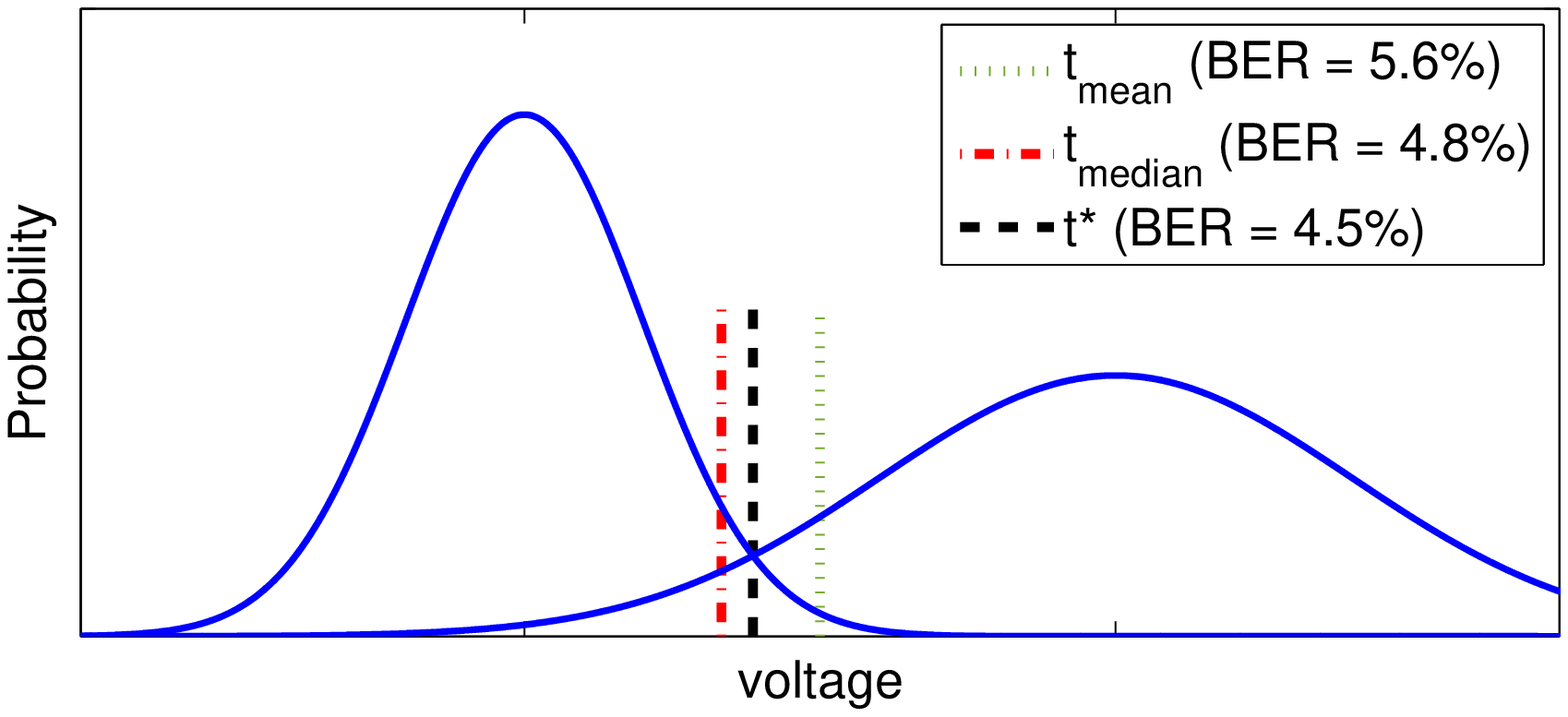,width=8cm} & \epsfig{figure=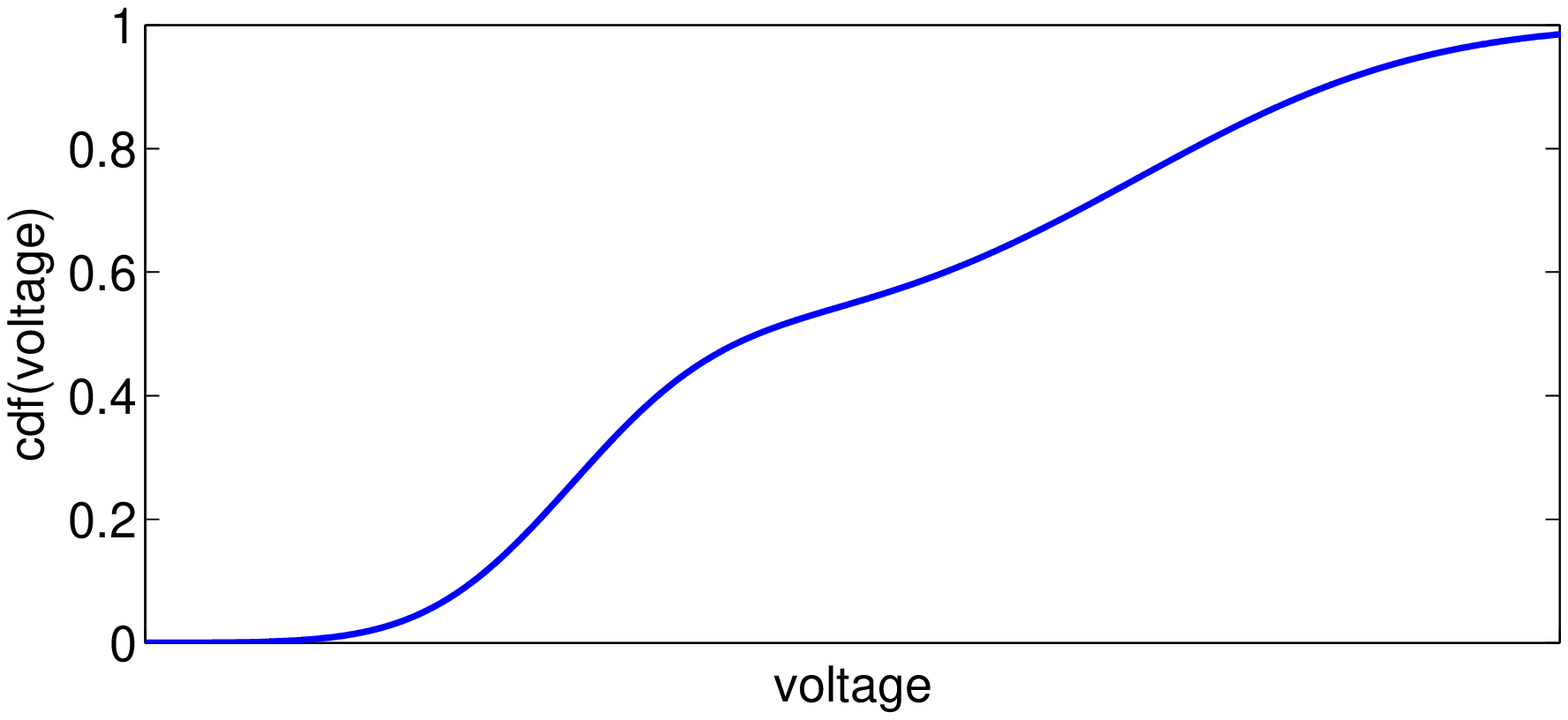,width=8cm}\\
  (a) & (b)
  \end{tabular}
\caption{(a): Cell voltages pdf in an SLC page, and BER for three different thresholds: $t_\mathrm{mean}=(\mu_1+\mu_2)/2$ is the average of the cell voltages, $t_\mathrm{median}$ returns the same number of 1s and 0s and $t^\star$ minimizes the BER and is located at the intersection of the two pdfs. (b): cdf corresponding to pdf in (a).}
\label{f-thresholds}
\end{figure}

There exist several ways in which the optimal threshold, $t^\star$, can be found. A common approach is to perform several reads by shifting the thresholds in one direction until the decoding succeeds. Once the data has been recovered, it can be compared with the read outputs to find the threshold \newt{yielding the lowest BER~\cite{papandreou2014using}}. However, this method \newt{can require a large number of reads if the initial estimate is inaccurate}, which reduces read throughput, and additional memory to store \newt{and compare} the successive reads, which increases cost. The approach taken in this paper consists of estimating $(\mu_1,\sigma_1)$ and $(\mu_2,\sigma_2)$ and deriving $t^\star$ analytically. It will be shown how this can be done with as few reads as possible, thereby reducing read time. Furthermore, the mean and standard deviation estimates can also be used for other tasks, such as generating soft information for LDPC decoding.

A read operation with a threshold voltage $t$ returns a binary vector with a one for each cell whose voltage level is lower than $t$ and zero otherwise. The fraction of ones in the read output is then equal to the probability of a randomly chosen cell having a voltage level below $t$. Consequently, a read with a threshold voltage $t$ can be used to obtain a sample from the cumulative distribution function (cdf) of cell voltages at $t$, illustrated in Fig.~\ref{f-thresholds} (b).

The problem is then reduced to estimating the means and variances of a mixture of Gaussians using samples from their joint cdf. These samples will be corrupted by model, read, and quantization noise. Model noise is caused by the deviation of the actual distribution of cell voltages from a Gaussian distribution. Read noise is caused by the intrinsic reading mechanism of the flash, which can read some cells as storing higher or lower voltages than they actually have. Quantization noise is caused by limited computational accuracy and rounding of the Gaussian cdf\footnote{Since the Gaussian cdf has no analytical expression, it is generally quantized and stored as a lookup table}. All these sources of noise are collectively referred to as read noise in this paper. It is assumed to be zero mean, but no other restriction is imposed in our derivations.

It is desirable to devote as few reads as possible to the estimation of $(\mu_1,\sigma_1)$ and $(\mu_2,\sigma_2)$. The accuracy of the estimates would improve with the number of reads, but read time would also increase. Since there are four parameters to be estimated, at least four reads will be necessary. Section~\ref{s-minBER} describes how the locations of the read thresholds should be chosen in order to achieve accurate estimates and Section~\ref{s-ldpc} extends the framework to consider how these reads could be reused to obtain soft information for an LDPC decoder. If the soft information obtained from the first four reads is enough for the LDPC decoding to succeed, no additional reads will be required, thereby reducing the total read time of the flash. Section~\ref{s-dynamicProgramming} proposes a dynamic programming method for optimizing the thresholds for a desired objective. Finally, Section~\ref{s-Extension} explains how to extend the algorithm for MLC or TLC memories, as well as for non-Gaussian noise distributions. Section~\ref{s-NumResults} provides simulation results to evaluate the performance of the proposed algorithms and Section~\ref{s-conclusion} concludes the paper.

\section{Hard decoding: minimizing BER}
\label{s-minBER}

\subsection{Parameter estimation}
\label{ss-paramEstimation}
Let $t_i$, $i=1,\ldots,4$ be four voltage thresholds used for reading a page and let $y_i$, $i=1,\ldots,4$ be the fraction of ones in the output vector for each of the reads, respectively. If $(\mu_1,\sigma_1)$ and $(\mu_2,\sigma_2)$ denote the voltage mean and variance for the cells programmed to the two levels, then
\begin{equation}\label{e-mixGaussCdf}
y_i=\frac{1}{2}Q\left(\frac{\mu_1-t_i}{\sigma_1}\right)+\frac{1}{2}Q\left(\frac{\mu_2-t_i}{\sigma_2}\right)+n_{y_i}, \qquad i=1,\ldots,4,
\end{equation}
where
\begin{equation}\label{e-qfunct}
Q(x) = \int_{x}^\infty(2\pi)^{-\frac{1}{2}}e^{-\frac{t^2}{2}}dt
\end{equation}
and $n_{y_i}$ denotes the read noise associated to $y_i$. In theory, it is possible to estimate $(\mu_1,\sigma_1)$ and $(\mu_2,\sigma_2)$ from  $(t_i,y_i)$, $i=1,\ldots,4$ by solving the system of non-linear equations in Eq.~(\ref{e-mixGaussCdf}), but in practice the computational complexity \newt{could be too large for some systems}. Another possible approach would be to restrict the estimates to a pre-defined set of values and generate a lookup table for each combination. Finding the table which best fits the samples would require negligible time but the amount of memory required \newt{could render this approach impractical for some systems}. This section proposes and evaluates a progressive read algorithm that combines these two approaches, providing similar accuracy to the former and requiring only a standard normal \newt{($\mu=0$, $\sigma=1$)} look-up table.

{\bf\emph{Progressive Read Algorithm:}} The key idea is to perform two reads at locations where one of the $Q$ functions is known to be either close to 0 or close to 1. The problem with solving the system in Eq.~(\ref{e-mixGaussCdf}) was that a sum of $Q$ functions cannot be easily inverted. However, once one of the two $Q$ functions is fixed at 0 or 1, the equation can be put in linear form using a standard normal table to invert the other $Q$ function. The system of linear equations can then be solved to estimate the first mean and variance. Once the first mean and variance have been estimated they can be used to evaluate a $Q$ function from each of the two remaining equations in Eq.~(\ref{e-mixGaussCdf}), which can then be solved in a similar way. For example, if $t_1$ and $t_2$ are significantly smaller than $\mu_2$, then
\[
Q\left(\frac{\mu_2-t_1}{\sigma_2}\right)\simeq 0 \simeq Q\left(\frac{\mu_2-t_2}{\sigma_2}\right)
\]
 and Eq.~(\ref{e-mixGaussCdf}) can be solved for $\widehat{\mu_1}$ and $\widehat{\sigma_1}$ to get
 \begin{equation}
   \widehat{\sigma_1} = \frac{t_2-t_1}{Q^{-1}(2y_1)-Q^{-1}(2y_2)} \qquad \widehat{\mu_1} = t_2 + \widehat{\sigma_1} Q^{-1}(2y_2)\label{e-estimators}.
   \end{equation}
Substituting these in the equations for the third and fourth reads and solving gives
   \begin{equation}
   \widehat{\sigma_2} = \frac{t_4-t_3}{Q^{-1}(2y_3-q_3)-Q^{-1}(2y_4-q_4)} \qquad \widehat{\mu_2} = t_4 + \widehat{\sigma_2} Q^{-1}(2y_4-q_4),\label{e-estimators2-b}
   \end{equation}
where
   \[
   q_3=Q\left(\frac{\widehat{\mu_1}-t_3}{\widehat{\sigma_1}}\right) \qquad q_4=Q\left(\frac{\widehat{\mu_1}-t_4}{\widehat{\sigma_1}}\right).
   \]

   It could be argued that, since the pdfs are not known a priori, it is not possible to determine two read locations where one of the $Q$ functions is close to 0 or close to 1. In practice, however, each read threshold can be chosen based on the result from the previous ones. For example, say the first randomly chosen read location returned $y_1=0.6$. This read, if used for estimating the higher level distribution, will be a bad choice because there will be significant overlap from the lower level. Hence, a smart choice would be to obtain two reads for the lower level that are clear of the higher level by reading to the far left of $t_1$. Once the lower level is canceled, the $y_1=0.6$ read can be used in combination with a fourth read to the right of $t_1$ to estimate the higher level distribution.

   Once the mean and variance of both pdfs have been estimated, it is possible to derive an estimate for the read threshold minimizing the BER. The BER associated to a given read threshold $t$ is given by
\begin{equation}\label{e-t2BER}
\mathrm{BER}(t)=\frac{1}{2}\left(Q\left(\frac{\mu_2-t}{\sigma_2}\right)+1-Q\left(\frac{\mu_1-t}{\sigma_1}\right)\right).
\end{equation}
Making its derivative equal to zero gives the following equation for the optimal threshold $t^\star$
\begin{equation}\label{e-intersection}
\frac{1}{\sigma_2}\phi\left(\frac{\mu_2-t^\star}{\sigma_2}\right)=\frac{1}{\sigma_1}\phi\left(\frac{\mu_1-t^\star}{\sigma_1}\right),
\end{equation}
where $\phi(x)=(2\pi)^{-(1/2)}e^{-x^2/2}$. The optimal threshold $t^\star$ is located at the point where both Gaussian pdfs intersect. An estimate $\widehat{t^\star}$ for $t^\star$ can be found from the following quadratic equation
\begin{equation}\label{e-threshold}
2\log\left(\frac{\widehat{\sigma_2}}{\widehat{\sigma_1}}\right)=\left(\frac{\widehat{t^\star}-\widehat{\mu_1}}{\widehat{\sigma_1}}\right)^2-\left(\frac{\widehat{t^\star}-\widehat{\mu_2}}{\widehat{\sigma_2}}\right)^2,
\end{equation}
which can be shown to be equivalent to solving Eq.~(\ref{e-intersection}) with $(\mu_1,\sigma_1)$ and $(\mu_2,\sigma_2)$ replaced by their estimated values.

If some parameters are known, the number of reads can be reduced. For example, if $\mu_1$ is known, the first read can be replaced by $t_1=\mu_1$, $y_1=0.25$ in the above equations. Similarly, if $\sigma_1$ is known $(t_1,y_1)$ are not required in Eqs.~(\ref{e-estimators})-(\ref{e-estimators2-b}).

\subsection{Error propagation}
\label{ss-errorprop}
This subsection first studies how the choice of read locations affects the accuracy of the estimators $(\widehat{\mu_1},\widehat{\sigma_1})$, $(\widehat{\mu_2},\widehat{\sigma_2})$, and correspondingly $\widehat{t^\star}$. Then it analyzes how the accuracy of $\widehat{t^\star}$ translates into BER$(\widehat{t^\star})$, and provides some guidelines as to how the read locations should be chosen. Without loss of generality, it will be assumed that $(\mu_1,\sigma_1)$ are estimated first using $(t_1,y_1)$ and $(t_2,y_2)$ according to the Progressive Read Algorithm described in Section~\ref{ss-paramEstimation}, and $(\mu_2,\sigma_2)$ are estimated in the second stage. In this case, Eq.~(\ref{e-mixGaussCdf}) reduces to
\[
Q\left(\frac{\mu_1-t_i}{\sigma_1}\right)=2y_i-2n_{y_i}
\]
for $i=1,2$ and the estimates are given by Eqs.~(\ref{e-estimators}).

If the read thresholds are on the tails of the distributions, a small perturbation in the cdf value $y$ could cause a significant change in $Q^{-1}(y)$. This will in turn lead to a significant change in the estimates. Specifically, a first-order Taylor expansion of $Q^{-1}(y+n_y)$ at $y$ can be written as
\begin{equation}\label{e-errorQinv}
Q^{-1}(y+n_y) = x - \sqrt{2\pi}e^{\frac{x^2}{2}}n_y + O(n_y^2),
\end{equation}
where $x=Q^{-1}(y)$. Since the exponent of $e$ is always positive, the first-order error term is minimized when $x=0$, i.e., when the read is performed at the mean. The expressions for $(\widehat{\mu_1},\widehat{\sigma_1})$ and $(\widehat{\mu_2},\widehat{\sigma_2})$ as seen in Eqs.~(\ref{e-estimators})-(\ref{e-estimators2-b}) use inverse $Q$ functions, so the estimation error due to read noise will be reduced when the reads are done close to the mean of the Gaussian distributions. The first order Taylor expansion of Eq.~(\ref{e-estimators}) at $\sigma_1$ is given by
\begin{align}
\widehat{\sigma_1}  &= \sigma_1 - \frac{\sigma_1^2}{t_2-t_1}(n_2-n_1) + O(n_1^2,n_2^2)\label{e-errorvar}
\end{align}
where
\begin{equation}
n_1 = 2\sqrt{2\pi}e^{\frac{(t_1-\mu_1)^2}{2\sigma_1^2}}n_{y_1} + O(n_{y_1}^2) \qquad n_2 = 2\sqrt{2\pi}e^{\frac{(t_2-\mu_1)^2}{2\sigma_1^2}}n_{y_2} + O(n_{y_2}^2).\label{e-errorQinv2}
\end{equation}
A similar expansion can be performed for $\widehat{\mu_1}$, obtaining
\begin{align}
\widehat{\mu_1}  & = \mu_1 - \sigma_1\frac{(t_2-\mu_1)n_1 - (t_1-\mu_1)n_2}{t_2-t_1} + O(n_1^2,n_2^2).\label{e-errormean}
\end{align}

Two different tendencies can be observed in the above expressions. On one hand, Eqs.~(\ref{e-errorQinv2}) suggest that both $t_1$ and $t_2$ should be chosen close to $\mu_1$ so as to reduce the magnitude of $n_1$ and $n_2$. On the other hand, if $t_1$ and $t_2$ are very close together, the denominators in Eq.~(\ref{e-errorvar}) and (\ref{e-errormean}) can become small, increasing the estimation error.

 The error expansions for $\widehat{\mu_2}$, $\widehat{\sigma_2}$ and $\widehat{t^\star}$, are omitted for simplicity, but it can be shown that the dominant terms are linear in $n_{y_i}$, $i=1,\ldots,4$ as long as all $n_{y_i}$ are small enough. The Taylor expansion for BER$(\widehat{t^\star})$ at $t^\star$ is
\begin{align}
\mathrm{BER}(\widehat{t^\star}) & = \mathrm{BER}(t^\star) + \left(\frac{1}{2\sigma_2}\phi\left(\frac{\mu_2-t^\star}{\sigma_2}\right)-\frac{1}{2\sigma_1}\phi\left(\frac{\mu_1-t^\star}{\sigma_1}\right)\right)e_{t^\star} + O(e_{t^\star}^2)\nonumber\\
& =  \mathrm{BER}(t^\star) + O(e_{t^\star}^2),
\end{align}
where $\widehat{t^\star}=t^\star+e_{t^\star}$. The cancellation of the first-order term is justified by Eq.~(\ref{e-intersection}). Summarizing, the mean and variance estimation error increases linearly with the read noise, as does the deviation in the estimated optimal read threshold. The increase in BER, on the other hand, is free from linear terms. As long as the read noise is not too large, the resulting BER($\widehat{t^\star}$) is close to the minimum possible BER. The numerical simulations in Fig.~\ref{f-errorprop} confirm these results.
\begin{figure}[htb]
\begin{minipage}[b]{1.0\linewidth}
  \centering
  \centerline{\epsfig{figure=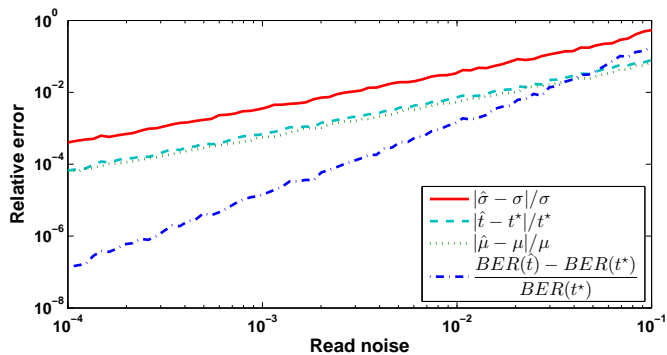,width=10cm}}
\end{minipage}
\caption{The relative error in the mean, variance, and threshold estimates increases linearly with the read noise (slope=1), but the relative increase in BER grows quadratically (slope=2) and is negligible for a wide range of read noise amplitudes.
}
\label{f-errorprop}
\end{figure}

In view of these results, it seems that the read thresholds should be spread out over both pdfs but close to the levels' mean voltages. Choosing the thresholds in this way will reduce the error propagating from the reads to the estimates. However, read thresholds can be chosen sequentially, using the information obtained from each read in selecting subsequent thresholds. Section~\ref{s-dynamicProgramming} proposes a method for finding the optimal read thresholds more precisely.

\section{Soft Decoding: Tradeoff BER-LLR}
\label{s-ldpc}
This section considers a new scenario where a layered decoding approach is used for increased error-correction capability. After reading a page, the controller may first attempt to correct any bit errors in the read-back codeword using a hard decoder alone, typically a bit-flipping hard-LDPC decoder~\cite{nguyen2011two}. Reading with the threshold $\widehat{t^\star}$ found through Eq.~(\ref{e-threshold}) reduces the number of hard errors but there are cases in which even BER($\widehat{t^\star}$) is too high for the hard decoder to succeed. When this happens, the controller will attempt a soft decoding, typically using a min-sum or sum-product soft LDPC decoder.

 Soft decoders are more powerful, but also significantly slower and less power efficient than hard decoders. Consequently, \newt{invoking soft LDPC decoding too often can significantly impact} the controller's average read time. In order to estimate \newt{the probability of requiring soft decoding}, one must look at the distribution of the number of errors, and not at BER alone. \newt{For example, if the number of errors per codeword is uniformly distributed between 40 and 60 and the hard decoder can correct 75 errors, soft decoding will never be needed. However, if the number of errors is uniformly distributed between 0 and 100 (same BER), soft decoding will be required to decode 25\% of the reads.} Section~\ref{ss-usage} addresses this topic.

 The error-correction capability of a soft decoder depends heavily on the quality of the soft information at its input. It is always possible to increase such quality by performing additional reads, but this decreases read throughput. Section~\ref{ss-llrReadLoc} shows how the Progressive Read Algorithm from the previous section can be modified to provide high quality soft information.

\subsection{Distribution of the number of errors}
\label{ss-usage}
Let $N$ be the number of bits in a codeword. Assuming that both levels are equally likely, the probability of error for any given bit, denoted $p_e$, is given in Eq.~(\ref{e-t2BER}). Errors can be considered independent, hence the number of them in a codeword follows a binomial distribution with parameters $N$ and $p_e$. Since $N$ is usually large, it becomes convenient to approximate the binomial by a Gaussian distribution with mean $Np_e$ and variance $Np_e(1-p_e)$, or by a Poisson distribution with parameter $Np_e$ when $Np_e$ is small.

Under the Gaussian approximation paradigm, a codeword fails to decode with probability $Q\left(\frac{\alpha-Np_e}{\sqrt{Np_e(1-p_e)}}\right)$, where $\alpha$ denotes the number of bit errors that can be corrected.
\begin{table}
\begin{center}
\begin{tabular}{|c|c c c|}
\hline
Failure rate & $p_e=0.008$ & $p_e=0.01$ & $p_e=0.012$\\
\hline
$\alpha = 23$ & 0.05 & 0.28 & 0.62 \\
$\alpha = 25$ & 0.016 & 0.15 & 0.46 \\
$\alpha = 27$ & 0.004& 0.07 & 0.31\\
\hline
\end{tabular}
\end{center}
\caption{Failure rate for a $N=2048$ BCH code \newt{as a function of probability of bit error $p_e$ and number of correctable bit errors $\alpha$}. }\label{t-errDistrib}\vspace{-1.5cm}
\end{table}
Table~\ref{t-errDistrib} shows that a small change in the value of $\alpha$ may increase significantly the frequency with which a stronger decoder is needed. This has a direct impact on average power consumption of the controller. The distribution of bit errors can thus be used to judiciously obtain a value of $\alpha$ in order to meet a power constraint.

\subsection{Obtaining soft inputs}
\label{ss-llrReadLoc}
After performing $M$ reads on a page, each cell can be classified as falling into one of the $M+1$ intervals between the read thresholds. The problem of reliably storing information on the flash is therefore equivalent to the problem of reliable transmission over a discrete memoryless channel (DMC), such as the one in Fig.~\ref{f-dmc_channel}. Channel inputs represent the levels to which the cells are written, outputs represent read intervals, and channel transition probabilities specify how likely it is for cells programmed to a specific level to be found in each interval at read time.

\begin{figure}[htb]
\begin{center}
\includegraphics[width=.25\linewidth]{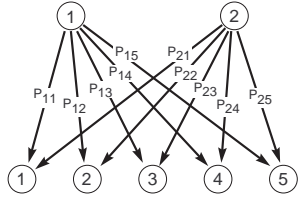}
\caption{DMC channel equivalent to Flash read channel with four reads.}
\label{f-dmc_channel}
\end{center}
\end{figure}

It is well known that the capacity of a channel is given by the maximum mutual information between the input and output over all input distributions (codebooks)~\cite{cover2012elements}. In practice, however, the code must be chosen at write time when the channel is still unknown, making it impossible to adapt the input distribution to the channel. Although some asymmetric codes have been proposed (e.g.~\cite{zhou2011nonuniform}\newt{,~\cite{kayser2014constructions}},~\cite{berman2011constrained}), channel inputs are equiprobable for most practical codes. The mutual information between the input and the output is then given by
\begin{align}
I(X;Y) 
& = \frac{1}{2}\sum_{j=1}^{M+1} p_{1j}\log(p_{1j})+p_{2j}\log(p_{2j})-(p_{1j}+p_{2j})\log\left(\frac{p_{1j}+p_{2j}}{2}\right), \label{e-symmCapacity}
\end{align}
where $p_{ij}$, $i=1,2$, $j=1,\ldots,M+1$ are the channel transition probabilities. For Gaussian noise, these transition probabilities can be found as
\begin{equation}\label{e-transProb}
p_{ij}=Q\left(\frac{\mu_i - t_{j}}{\sigma_i}\right) - Q\left(\frac{\mu_i - t_{j-1}}{\sigma_i}\right),
\end{equation}
where $t_0 = -\infty$ and $t_{M+1}=\infty$.

The inputs to a soft decoder are given in the form of log-likelihood ratios (LLR). The LLR value associated with a read interval $k$ is defined as $LLR_k=\log(p_{1k}/p_{2k})$. When the mean and variance are known it is possible to obtain good LLR values by reading at the locations that maximize $I(X;Y)$~\cite{wang2011soft}, which tend to be on the so-called uncertainty region, where both pdfs are comparable. However, the mean and variance are generally not known and need to be estimated. Section~\ref{s-minBER} provided some guidelines on how to choose read thresholds in order to obtain accurate estimates, but those reads tend to produce poor LLR values. Hence, there are two opposing trends: spreading out the reads over a wide range of voltage values yields more accurate mean and variance estimates but degrades the performance of the soft decoder, while concentrating the reads on the uncertainty region provides better LLR values but might yield inaccurate estimates which in turn undermine the soft decoding.

Some flash manufacturers are already incorporating soft read commands that return 3 or 4 bits of information for each cell, but the thresholds for those reads are often pre-specified and kept constant throughout the lifetime of the device. Furthermore, most controller manufacturers use a pre-defined mapping of read intervals to LLR values regardless of the result of the reads. We propose adjusting the read thresholds and LLR values adaptively to fit our channel estimates.

Our goal is to find the read locations that maximize the probability of successful decoding when levels are equiprobable and the decoding is done based on the estimated transition probabilities. With this goal in mind, Section~\ref{ss-bound} will derive a bound for the (symmetric and mismatched) channel capacity in this scenario and Section~\ref{s-dynamicProgramming} will show how to choose the read thresholds so as to maximize this bound. The \newt{error-free} coding rate specified by the bound will not be achievable in practice due to finite code length, limited computational power, etc., but the BER at the output of a decoder is closely related to the capacity of the channel~\cite{polyanskiy2010channel},~\cite{scarlett2013mismatched}. The read thresholds that maximize the capacity of the channel are generally the same ones that minimize the BER, \newt{in practice}.

\subsection{Bound for maximum transmission rate}
\label{ss-bound}

Shannon's channel coding theorem states that all transmission rates below the channel capacity are achievable when the channel is perfectly known to the decoder; unfortunately this is not the case in practice. The channel transition probabilities can be estimated by substituting the noise means and variances $\widehat{\mu_1},\widehat{\mu_2},\widehat{\sigma_1},\widehat{\sigma_2}$ 
into Eq.~(\ref{e-transProb}) but these estimates, denoted $\widehat{p_{ij}}$, $i=1,2$, $j=1,\ldots,5$, are inaccurate. The decoder is therefore not perfectly matched to the channel.

The subject of mismatched decoding has been of interest since the 1970's. The most notable early works are by Hui~\cite{hui1983fundamental} and Csisz\'{a}r and K\"{o}rner~\cite{csiszar1981graph}, who provided bounds on the maximum transmission rates under several different conditions. Merhav et al.~\cite{merhav1994information} related those results to the concept of generalized mutual information and, more recently, Scarlett et al.~\cite{scarlett2013mismatched} found bounds and error exponents for the finite code length case. It is beyond the scope of this paper to perform a detailed analysis of the mismatched capacity of a DMC channel with symmetric inputs; the interested reader can refer to the above references as well as~\cite{lapidoth1998reliable},~\cite{balakirsky1995converse},~\cite{alsan2014polarization}, and~\cite{arikan2009channel}. Instead, we will derive a simplified lower bound for the capacity of this channel in the same scenario that has been considered throughout the paper.

\begin{theorem} The maximum achievable rate of transmission with vanishing probability of error over a Discrete Memoryless Channel with equiprobable binary inputs, output alphabet $\mathcal{Y}$, transition probabilities $p_{ij}$, $i=1,2$, $j=1,\ldots,|\mathcal{Y}|$, and maximum likelihood decoding according to a different set of transition probabilities $\widehat{p_{ij}}$, $i=1,2$, $j=1,\ldots,|\mathcal{Y}|$ is lower bounded by
\begin{equation}\label{e-capacityHat}
C_{P,\hat{P}} = \frac{1}{2}\sum_{j=1}^{|\mathcal{Y}|} p_{1j}\log(\widehat{p_{1j}})+p_{2j}\log(\widehat{p_{2j}})-(p_{1j}+p_{2j})\log\left(\frac{\widehat{p_{1j}}+\widehat{p_{2j}}}{2}\right)
\end{equation}
\end{theorem}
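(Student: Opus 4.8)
The plan is a random-coding argument (in the spirit of Gallager's bounding technique) over i.i.d.\ codebooks, specialized to the mismatched metric and evaluated at the single Chernoff parameter $s=1$; this is precisely the choice that collapses the general and only implicitly defined generalized-mutual-information rate into the closed form of Eq.~(\ref{e-capacityHat}). If $C_{P,\hat{P}}\le 0$ there is nothing to prove, since rate $0$ is trivially achievable with zero error, so I may assume $C_{P,\hat{P}}>0$ and fix an arbitrary rate $R<C_{P,\hat{P}}$; it then suffices to exhibit a sequence of length-$n$ codes of rate $R$ with error probability tending to $0$, and to let $R\uparrow C_{P,\hat{P}}$.

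First I would fix the ensemble and decoder. Draw $2^{nR}$ codewords $\mathbf{x}(1),\dots,\mathbf{x}(2^{nR})$, each with components i.i.d.\ and equiprobable over the two input levels, matching the hypothesis of equiprobable binary inputs. Writing $\widehat{p}(j\mid i)=\widehat{p_{ij}}$, the decoder outputs $\arg\max_m \prod_{k=1}^{n}\widehat{p}(y_k\mid x_k(m))$, with ties broken arbitrarily; this is exactly maximum-likelihood decoding under the law $\widehat{P}$. If $\widehat{p_{ij}}=0$ for some pair with $p_{ij}>0$ then $C_{P,\hat{P}}=-\infty$ and the statement is vacuous, so I may assume $\widehat{p_{ij}}>0$ whenever $p_{ij}>0$, which makes every logarithm below finite with probability one.

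Next comes the error analysis. Suppose message $1$ is sent and $\mathbf{Y}$ is received, and let $\bar{\mathbf{X}}$ be an independent codeword with i.i.d.\ equiprobable components (a generic ``wrong'' codeword). A union bound gives
\[
P_e \;\le\; E\!\left[\min\!\left(1,\;2^{nR}\,\Pr\!\left[\textstyle\prod_k \widehat{p}(Y_k\mid \bar X_k)\ge \prod_k \widehat{p}(Y_k\mid X_k(1))\;\Big|\;\mathbf{X}(1),\mathbf{Y}\right]\right)\right].
\]
The key step is to bound the inner probability by Markov's inequality applied to the nonnegative variable $\prod_k \widehat{p}(Y_k\mid\bar X_k)$: conditioned on $\mathbf{Y}$ its mean factorizes as $\prod_k \widehat{p}_Y(Y_k)$ with $\widehat{p}_Y(j)=\tfrac12(\widehat{p_{1j}}+\widehat{p_{2j}})$, so the inner probability is at most $\prod_k \widehat{p}_Y(Y_k)/\widehat{p}(Y_k\mid X_k(1))$. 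Setting $W_k=\log\!\big(\widehat{p}(Y_k\mid X_k(1))/\widehat{p}_Y(Y_k)\big)$, the $W_k$ are i.i.d.\ and, since $(X_k(1),Y_k)$ has law $P_X\times P$ with $P_X$ uniform, $E[W_k]=\tfrac12\sum_j\big(p_{1j}\log\widehat{p_{1j}}+p_{2j}\log\widehat{p_{2j}}-(p_{1j}+p_{2j})\log\tfrac{\widehat{p_{1j}}+\widehat{p_{2j}}}{2}\big)=C_{P,\hat{P}}$. Hence $P_e\le E\big[\min(1,\exp\{nR-\sum_k W_k\})\big]$, all quantities in nats. By the weak law of large numbers $\tfrac1n\sum_k W_k\to C_{P,\hat{P}}>R$ in probability, so $nR-\sum_k W_k\to-\infty$ in probability; since the integrand is bounded by $1$, a bounded-convergence argument yields $P_e\to 0$. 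Thus a deterministic code of rate $R$ with vanishing error probability exists, and letting $R\uparrow C_{P,\hat{P}}$ completes the proof.

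I expect the one point worth stating carefully — the union bound and law of large numbers being routine — to be why $s=1$ is both legitimate and responsible for the exact form of the bound: evaluating the Markov/Chernoff bound at $s=1$ is what turns the moment $E[\widehat{p}(Y\mid\bar X)^s\mid Y]$ into the mismatched output marginal $\widehat{p}_Y(Y)$ rather than an unwieldy expression, and hence produces Eq.~(\ref{e-capacityHat}) term by term; optimizing over $s\ge0$ would give a generally tighter but implicit bound, and the restriction to $s=1$ is exactly the ``simplification'' advertised in the statement. A minor secondary point is the zero-probability bookkeeping noted above, together with the harmless fact that ties in the mismatched ML rule need not be analyzed separately because the union bound already counts the event with ``$\ge$''.
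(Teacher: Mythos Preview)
Your proof is correct and reaches the same conclusion, but by a genuinely different route from the paper.

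The paper follows the Cover--Thomas style typicality argument: it defines a modified jointly typical set $\hat{A}_\epsilon^{(n)}$ with respect to the \emph{estimated} entropies $\mu_Y$ and $\mu_{XY}$, uses the weak law of large numbers to show the transmitted pair lies in it with high probability, then bounds the size of the ``dangerous'' set $S_\mathbf{y}=\{\mathbf{x}:\hat{P}(\mathbf{y}\mid\mathbf{x})\ge 2^{n(1-\mu_{XY}-\epsilon)}\}$ by a counting argument, so that a uniformly random wrong codeword falls in $S_\mathbf{y}$ with probability at most $|S_\mathbf{y}|2^{-n}$. Your approach skips the typical-set machinery entirely and applies Markov's inequality to the mismatched likelihood of a random wrong codeword, which (because the input distribution is uniform) collapses to the mismatched output marginal $\widehat{p}_Y(j)=\tfrac12(\widehat{p_{1j}}+\widehat{p_{2j}})$; the WLLN on the i.i.d.\ log-ratios $W_k$ then does the rest.

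Your argument is shorter and more transparent about \emph{why} the bound has the particular form it does: the observation that this is the Gallager/GMI bound evaluated at $s=1$, and that this specific choice is what turns an implicit optimization into the closed-form expression of Eq.~(\ref{e-capacityHat}), is a point the paper does not make. The paper's version, on the other hand, is useful pedagogically because it makes the parallel with the standard achievability proof of Shannon's theorem explicit, which is presumably its intent. Your handling of the degenerate cases ($C_{P,\hat{P}}\le 0$ and zeros in $\widehat{p_{ij}}$) is also cleaner than the paper's, which leaves them implicit.
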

\begin{proof}
Provided in the Appendix.
\end{proof}

It is worth noting that $C_{P,\hat{P}}$ is equal to the mutual information given in Eq.~(\ref{e-symmCapacity}) when the estimates are exact, and decreases as the estimates become less accurate. In fact, the probability of reading a given value $y\in\mathcal{Y}$ can be measured directly as the fraction of cells mapped to the corresponding interval, so it is usually the case that $\widehat{p_{1k}}+\widehat{p_{2k}}=p_{1k}+p_{2k}$. The bound then becomes $C_{P,\hat{P}}=I(X;Y)-D(P||\hat{P})$, where $I(X;Y)$ is the symmetric capacity of the channel with matched ML decoding and $D(P||\hat{P})$ is the relative entropy (also known as Kullback-Leibler distance) between the exact and the estimated transition probabilities.
\begin{align}
D(P||\hat{P}) & = \frac{1}{2}\sum_{j=1}^{|\mathcal{Y}|} p_{1j}\log\left(\frac{p_{1j}}{\widehat{p_{1j}}}\right)+p_{2j}\log\left(\frac{p_{2j}}{\widehat{p_{2j}}}\right).
\end{align}
In this case $C_{P,\hat{P}}$ is a concave function of the transition probabilities $(p_{ij},\widehat{p_{ij}})$, $i=1,2$, $j=1,\ldots,|\mathcal{Y}|$, since the relative entropy is convex and the mutual information is concave~\cite{cover2012elements}. The bound attains its maximum when the decoder is matched to the channel (i.e. $p_{ij}=\widehat{p_{ij}}\quad\forall i,j$) and the read thresholds are chosen so as to maximize the mutual information between $X$ and $Y$, but that solution is not feasible for our problem.

In practice, both the capacity of the underlying channel and the accuracy of the estimates at the decoder depend on the location of the read thresholds and cannot be maximized simultaneously.
%
Finding the read thresholds $t_1$, $t_2$, $t_3$, and $t_4$ which maximize $C_{P,\hat{P}}$ is not straightforward, but it can be done numerically. Section~\ref{s-dynamicProgramming} describes a dynamic programming algorithm for choosing each read threshold based on prior information about the noise and the result of previous reads.

\section{Optimizing read thresholds}
\label{s-dynamicProgramming}

In most practical cases, the flash controller has prior information about the voltage distributions, based on the number of PE cycles that the page has endured, its position within the block, etc. This prior information is generally not enough to produce accurate noise estimates, but it can be used to improve the choice of read thresholds. We wish to determine a policy to choose the optimal read thresholds sequentially, given the prior information about the voltage distributions and the results in previous reads.

This section proposes a dynamic programming framework to find the read thresholds that maximize the expected value of a user-defined reward function. If the goal is to minimize the BER at the estimated threshold $\widehat{t^\star}$, as in Section~\ref{s-minBER}, an appropriate reward would be $1-BER(\widehat{t^\star})$. If the goal is to maximize the channel capacity, the reward could be chosen to be $I(X;Y)-D(P\|\hat{P})$, as shown in Section~\ref{ss-bound}.

Let $\mathbf{x}=(\mu_1,\mu_2,\sigma_1,\sigma_2)$ and $\mathbf{r_i}=(t_i,y_i)$, $i=1,\ldots,4$ be vector random variables, so as to simplify the notation. If the read noise distribution $f_{n}$ is known, the prior distribution for $\mathbf{x}$ can be updated based on the result of each read $\mathbf{r_i}$ using Bayes rule and Eq.~(\ref{e-mixGaussCdf}):
\begin{align}
f_{\mathbf{x}|\mathbf{r_1},\ldots,\mathbf{r_i}} & = K\cdot f_{y_i|\mathbf{x},t_i}\cdot f_{\mathbf{x}|\mathbf{r_1},\ldots,\mathbf{r_{i-1}}}\nonumber\\
& = K\cdot f_n\left(y_i-\frac{1}{2}\left(Q\left(\frac{\mu_1-t_i}{\sigma_1}\right)+Q\left(\frac{\mu_2-t_i}{\sigma_2}\right)\right)\right)\cdot f_{\mathbf{x}|\mathbf{r_1},\ldots,\mathbf{r_{i-1}}},\label{e-distribUpdate}
\end{align}
where $K$ is a normalization constant. Furthermore, let $R(\mathbf{r_1},\mathbf{r_2},\mathbf{r_3},\mathbf{r_4})$ denote the expected reward associated with the reads $\mathbf{r_1},\ldots,\mathbf{r_4}$, after updating the prior $f_\mathbf{x}$ accordingly. In the following, we will use $R$ to denote this function, omitting the arguments for the sake of simplicity.

Choosing the fourth read threshold $t_4$ after the first three reads $\mathbf{r_1}$,\ldots,$\mathbf{r_3}$ is relatively straightforward: $t_4$ should be chosen so as to maximize the expected reward, given the results of the previous three reads. Formally,
\begin{align}
t_4^\star & = \arg\max_{t_4}E\left\{R|\mathbf{r_1},\ldots,\mathbf{r_3},t_4\right\},
\end{align}
where the expectation is taken with respect to $(y_4,\mathbf{x})$ by factoring their joint distribution in a similar way to Eq.~(\ref{e-distribUpdate}): $f_{y_4,\mathbf{x}|\mathbf{r_1},\ldots,\mathbf{r_3}}=f_{y_4|\mathbf{x},t_4}\cdot f_{\mathbf{x}|\mathbf{r_1},\ldots,\mathbf{r_3}}$.

This defines a policy $\pi$ for the fourth read, and a value $V_3$ for each possible state after the first three reads:
\begin{align}
\pi_4(\mathbf{r_1},\ldots,\mathbf{r_3}) & = t_4^\star\label{e-policy4}\\
V_3(\mathbf{r_1},\ldots,\mathbf{r_3}) & = E\left\{R|\mathbf{r_1},\ldots,\mathbf{r_3},t_4^\star\right\}.\label{e-value4}
\end{align}
In practice, the read thresholds $t_i$ and samples $y_i$ can only take a finite number of values, hence the number of feasible arguments in these functions (states) is also finite. This number can be fairly large, but it is only necessary to find the value for a small number of them, those which have non-negligible probability according to the prior $f_{\mathbf{x}}$ and value significantly larger than 0. For example, states are invariant to permutation of the reads so they can always be reordered such that $t_1<t_2<t_3$. Then, states which do not fulfill $y_1<y_2<y_3$ can be ignored. If the number of states after discarding meaningless ones is still too large, it is also possible to use approximations for the policy and value functions~\cite{bertsekas1995dynamic},~\cite{wang2014approximate}.

 Equations~(\ref{e-policy4})~and~(\ref{e-value4}) assign a value and a fourth read threshold to each meaningful state after three reads. The same idea, using a backward recursion, can be used to decide the third read threshold and assign a value to each state after two reads:
 \begin{align}
\pi_3(\mathbf{r_1},\mathbf{r_2}) & = \arg\max_{t_3}E\left\{V_3(\mathbf{r_1},\ldots,\mathbf{r_3})|\mathbf{r_1},\mathbf{r_2},t_3\right\}\label{e-policy3}\\
V_2(\mathbf{r_1},\mathbf{r_2}) & = \max_{t_3}E\left\{V_3(\mathbf{r_1},\ldots,\mathbf{r_3})|\mathbf{r_1},\mathbf{r_2},t_3\right\}\label{e-value3},
\end{align}
where the expectation is taken with respect to $(y_3,\mathbf{x})$. Similarly, for the second read threshold
\begin{align}
\pi_2(\mathbf{r_1}) & = \arg\max_{t_2}E\left\{V_2(\mathbf{r_1},\mathbf{r_2})|\mathbf{r_1},t_2\right\}\label{e-policy2}\\
V_1(\mathbf{r_1}) & = \max_{t_2}E\left\{V_2(\mathbf{r_1},\mathbf{r_2})|\mathbf{r_1},t_2\right\}\label{e-value2},
\end{align}
where the expectation is taken with respect to $(y_2,\mathbf{x})$. Finally, the optimal value for the first read threshold is
\[
t_1^\star = \arg\max_{t_1}E\left\{V_1(t_1,y_1)|t_1\right\}.
\]

These policies can be computed offline and then programmed in the memory controller. Typical controllers have multiple modes tailored towards different conditions in terms of number of PE cycles, whether an upper or lower page is being read, etc. Each of these modes would have its own prior distributions for $(\mu_1,\mu_2,\sigma_1,\sigma_2)$, and would result in a different policy determining where to perform each read based on the previous results. Each policy can be stored as a partition of the feasible reads, and value functions can be discarded, so memory requirements are very reasonable. Section~\ref{s-NumResults} presents an example illustrating this scheme.

\newt{Just like in Section~\ref{ss-paramEstimation}, the number of reads can be reduced if some of the noise parameters are known or enough prior information is available. The same backward recursion could be used to optimize the choice of thresholds, but with fewer steps.}

\section{Extensions}
\label{s-Extension}

Most of the paper has assumed that cells can only store two voltage levels, with their voltages following Gaussian distributions. This framework was chosen because it is the most widely used in the literature, but the method described can easily be extended to memories with more than two levels and non-Gaussian noise distributions.

Section~\ref{s-sysmodel} explained how each wordline in a MLC (two bits per cell, four levels) or TLC (three bits per cell, eight levels) memory is usually divided into two or three pages which are read independently as if the memory was SLC. In that case, the proposed method can be applied without any modifications. However, if the controller is capable of simultaneously processing more than two levels per cell, it is possible to accelerate the noise estimation by reducing the number of reads. MLC and TLC memories generally have dedicated hardware that performs multiple reads in the ranges required to read the upper pages and returns a single binary value. For example, reading the upper page of a TLC memory with the structure illustrated in Fig.~\ref{f-grayMapping} requires four reads with thresholds (A, C, E, G) but cells between A and C would be indistinguishable from cells between E and G; all of them would be read as 0. However, one additional read of the lower page (D threshold) would allow the controller to tell them apart.

Performing four reads $(t_1,\ldots,t_4)$ on the upper page of a TLC memory would entail comparing the cell voltages against 16 different thresholds but obtaining only four bits of information for each cell. The means and variances in Eqs.~(\ref{e-estimators})-(\ref{e-estimators2-b}) would correspond to mixtures of all the levels storing the same bit value, assumed to be approximately Gaussian. The same process would then be repeated for the middle and lower page. A better approach, albeit more computationally intensive, would be to combine reads from all three pages and estimate each level independently. Performing one single read of the lower page (threshold D), two of the middle page (each involving two comparisons, with thresholds B and F) and three of the upper page (each involving four comparisons, with thresholds A, C, E, G) would theoretically provide more than enough data to estimate the noise in all eight Gaussian levels. A similar process can be used for MLC memories performing, for example, two reads of the lower page and three of the upper page.

Hence, five page reads are enough to estimate the noise mean and variance in all 4 levels of an MLC memory and 6 page reads are enough for the 8 levels in a TLC memory. Other choices for the pages to be read are also possible, but it is useful to consider that lower pages have smaller probabilities of error, so they often can be successfully decoded with fewer reads. Additional reads could provide more precise estimates and better LLR values for LDPC decoding.

\newt{There are papers suggesting that a Gaussian noise model might not be accurate for some memories~\cite{parnell2014modelling}.} The proposed scheme can also be extended to other noise distributions, as long as they can be characterized by a small number of parameters. Instead of the $Q$-function in Eq.~(\ref{e-qfunct}), the estimation should use the cumulative density function (cdf) for the corresponding noise distribution. \newt{For example, if the voltage distributions followed a Laplace instead of Gaussian distribution, Eq.~(\ref{e-mixGaussCdf}) would become
\begin{equation}
y_i=\frac{1}{2}-\frac{1}{4}e^{-\frac{t_i-\mu_1}{b_1}}+\frac{1}{2}e^{-\frac{t_i-\mu_2}{b_2}}+n_{y_i},
\end{equation}
for $\mu_1\leq t_i\leq \mu_2$ and the estimator $\hat{b_1}$ of $b_1$ would become
\begin{equation}
\widehat{b_1}=\frac{t_2-t_1}{\log(1-2y_1)-\log(1-2y_2)}
\end{equation}
when $t_1$, $t_2$ are significantly smaller than $\mu_2$. Similar formulas can be found to estimate the other parameters.}

\section{Numerical results}
\label{s-NumResults}

This section presents simulation results evaluating the performance of the dynamic programming algorithm proposed in Section~\ref{s-dynamicProgramming}. Two scenarios will be considered, corresponding to a fresh page with BER$(t^\star)=0.0015$ and a worn-out page with BER$(t^\star)=0.025$. The mean voltage values for each level will be the same in both scenarios, but the standard deviations will differ. Specifically, $\mu_1=1$ and $\mu_2=2$ for both pages, but the fresh page will be modeled using $\sigma_1=0.12$ and $\sigma_2=0.22$, while the worn page will be modeled using $\sigma_1=0.18$ and $\sigma_2=0.32$. These values, however, are unknown to the controller. The only information that it can use to choose the read locations are uniform prior distributions on $\mu_1$, $\mu_2$, $\sigma_1$, and $\sigma_2$, identical for both the fresh and the worn-out pages. Specifically, $\mu_1$ is known to be in the interval $(0.75,1.25)$, $\mu_2$ in $(1.8,2.1)$, $\sigma_1$ in $(0.1,0.24)$ and $\sigma_2$ in $(0.2,0.36)$.

For each scenario, three different strategies for selecting the read thresholds were evaluated. The first strategy, $S_1$, tries to obtain accurate noise estimates by spreading out the reads. The second strategy, $S_2$,  concentrates all of them on the uncertainty region, attempting to attain highly informative LLR values. Finally, the third strategy, $S_3$, follows the optimal policy obtained by the dynamic programming recursion proposed in Section~\ref{s-dynamicProgramming}, with $C_{P,\hat{P}}$ as reward function. The three strategies are illustrated in Fig.~\ref{f-strategies} and the results are summarized in Table~\ref{t-results}, but before proceeding to their analysis we describe the process employed to obtain $S_3$.

The dynamic programming scheme assumed that read thresholds were restricted to move in steps of $0.04$, and quantized all cdf measurements also in steps of $0.04$ (making the noise $n_{y}$ from Eq.~(\ref{e-mixGaussCdf}) uniform between $-0.02$ and $0.02$). Starting from these assumptions, Eqs.~(\ref{e-policy4}) and (\ref{e-value4}) were used to find the optimal policy $\pi_4$ and expected value $V_3$ for all meaningful combinations of $(t_1,y_1,t_2,y_2,t_3,y_3)$, which were in the order of $10^6$ (very reasonable for offline computations). The value function $V_3$ was then used in the backward recursion to find the policies and values for the first three reads as explained in Section~\ref{s-dynamicProgramming}. The optimal location for the first read, in terms of maximum expected value for $I(X;Y)-D(P\|\hat{P})$ after all four reads, was found to be $t_1^\star=1.07$. This read resulted in $y_1=0.36$ for the fresh page and $y_1=0.33$ for the worn page. The policy $\pi_2$ dictated that $t_2=0.83$ for $y_1\in(0.34,0.38)$, and $t_2=1.63$ for $y_1\in(0.3,0.34)$, so those were the next reads in each case. The third and fourth read thresholds $t_3$ and $t_4$ were chosen similarly according to the corresponding policies.

Finally, as depicted in Fig.~\ref{f-strategies}, the read thresholds were
\begin{itemize}
\item $S_1$: $\mathbf{t}=(0.85,\ 1.15,\ 1.75,\ 2.125)$.
\item $S_2$: $\mathbf{t}=(1.2,\ 1.35,\ 1.45,\ 1.6)$.
\item $S_3$ (fresh page): $\mathbf{t}=(1.07,\ 0.83,\ 1.79,\ 1.31)$ resulting in $\mathbf{y}=(0.36,\ 0.04,\ 0.58,\ 0.496)$, respectively.
\item $S_3$ (worn page): $\mathbf{t}=(1.07,\ 1.63,\ 1.19,\ 1.43)$ resulting in $\mathbf{y}=(0.33,\ 0.56,\ 0.43,\ 0.51)$, respectively.
\end{itemize}
For the fresh page, the policy dictates that the first three reads should be performed well outside of the uncertainty region, so as to obtain good estimates of the means and variances. Then, the fourth read is performed as close as possible to the BER-minimizing threshold. Since the overlap between both levels is very small, soft decoding would barely provide any gain over hard decoding. Picking the first three reads for noise characterization regardless of their value towards building LLRs seems indeed to be the best strategy. For the worn-out page, the policy attempts to achieve a trade-off by combining two reads away from the uncertainty region, good for parameter estimation, with another two inside it to improve the quality of the LLR values used for soft decoding.
\begin{figure}[htb]
  \centering
  \begin{tabular}{cc}
  \epsfig{figure=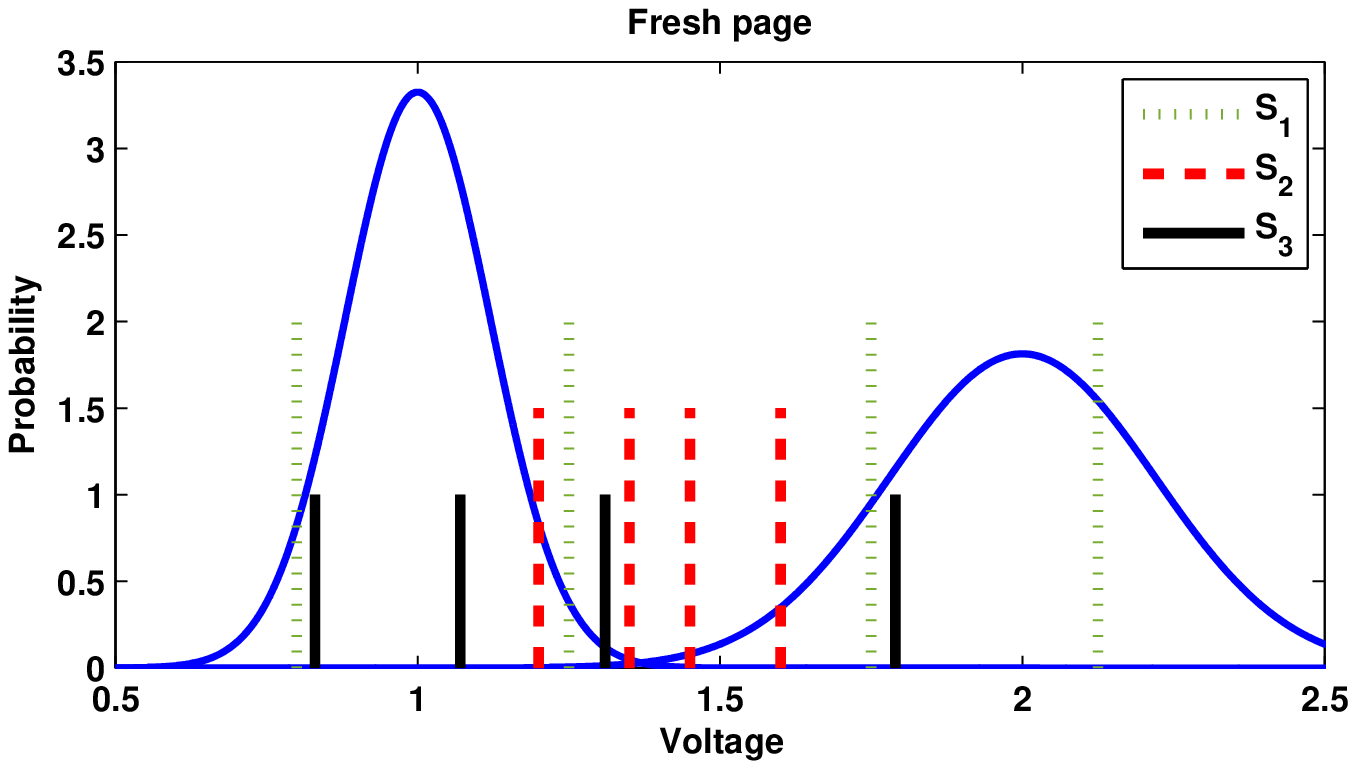,width=8cm} & \epsfig{figure=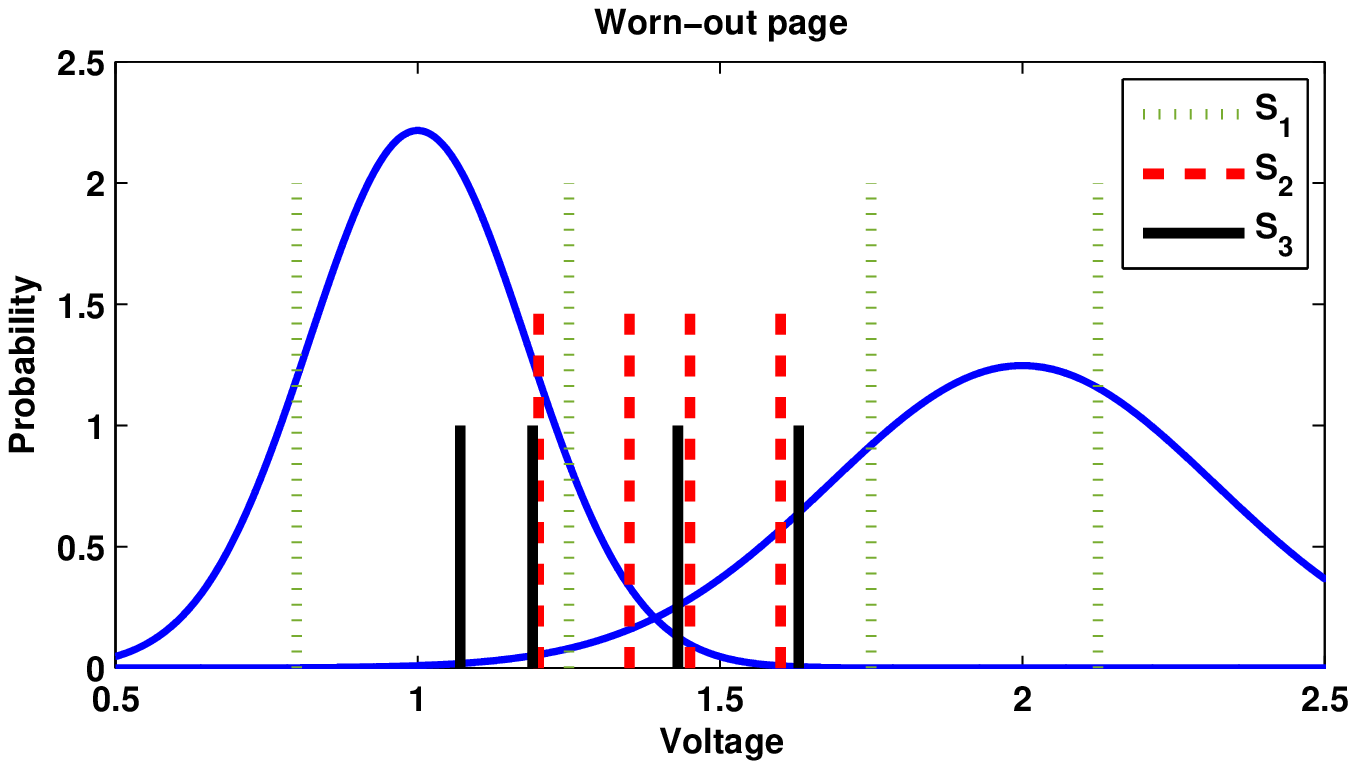,width=8cm}
  \end{tabular}
\caption{Read thresholds for strategies $S_1$, $S_2$ and $S_3$ for a fresh and a worn-out page.}
\label{f-strategies}
\end{figure}

\begin{table}
\centering
\begin{tabular}{c c}
\begin{tabular}{|c|c c c|}
\hline
FRESH PAGE & $S_1$ & $S_2$ & $S_3$\\
\hline
$|\hat\mu-\mu|/\mu$ & 0.004 & 0.182 & 0.012\\
$|\hat\sigma-\sigma|/\sigma$ & 0.03 & 0.91 & 0.12 \\
$|\widehat{t^\star}-t^\star|/t^\star$ & 0.01 & 0.07 & 0.02\\
$|\mathrm{BER}(\widehat{t^\star})-\mathrm{BER}(t^\star)|/\mathrm{BER}(t^\star)$ & 0.1 & 1.4 & 0.11\\
LDPC fail rate & 1 & 0.15 & 0\\
Genie LDPC fail rate & 1 & 0 & 0\\
\hline
\end{tabular}
&
\begin{tabular}{|c|c c c|}
\hline
OLD PAGE & $S_1$ & $S_2$ & $S_3$\\
\hline
$|\hat\mu-\mu|/\mu$ & 0.005 & 0.053 & 0.021\\
$|\hat\sigma-\sigma|/\sigma$ & 0.03 & 0.27 & 0.13 \\
$|\widehat{t^\star}-t^\star|/t^\star$ & 0.006 & 0.015 & 0.011\\
$|\mathrm{BER}(\widehat{t^\star})-\mathrm{BER}(t^\star)|/\mathrm{BER}(t^\star)$ & 0.003 & 0.009 & 0.007 \\
LDPC fail rate & 1 & 0.19 & 0.05\\
Genie LDPC fail rate& 1 & 0 & 0.01\\
\hline
\end{tabular}
\end{tabular}
\caption{Trade-off between BER and LDPC failure rate.}\label{t-results}\vspace{-2cm}
\end{table}

Table~\ref{t-results} shows the relative error in our estimates and sector failure rates averaged over 5000 simulation instances, with read noise $n_{y_i}$, $i=1,\ldots,4$ uniformly distributed between $-0.02$ and $0.02$. The first three rows show the relative estimation error of the mean, variance, and optimal threshold. It can be observed that $S_1$ provides the lowest estimation error, while $S_2$ produces clearly wrong estimates. The estimates provided by $S_3$ are noisier than those provided by $S_1$, but are still acceptable. The relative increase in BER when reading at $\widehat{t^\star}$ instead of at $t^\star$ is shown in the fourth row of each table. It is worth noting that the BER$(\widehat{t^\star})$ does not increase significantly, even with inaccurate mean and variance estimates. This validates the derivation in Section~\ref{ss-errorprop}.

Finally, the last two rows on each table show the failure rate after 20 iterations of a min-sum LDPC decoder for two different methods of obtaining soft information. The LDPC code had $18\%$ redundancy and codeword length equal to 35072 bits. The fifth row corresponds to LLR values obtained using the mean and variance estimates from the Progressive Read Algorithm and the last row, labeled ``Genie LDPC", corresponds to using the actual values instead of the estimated ones. It can be observed that strategy $S_1$, which provided very accurate estimates, always fails in the LDPC decoding. This is due to the wide range of cell voltages that fall between the middle two reads, being assigned an LLR value close to $0$. The fact that the ``Genie LDPC"  performs better with $S_2$ than with $S_3$ shows that the read locations chosen by the former are better. However, $S_3$ provides lower failure rates in the more realistic case where the means and variances need to be estimated using the same reads used to produce the soft information.

In summary, $S_3$ was found to be best from an LDPC code point of view and $S_1$ from a pure BER-minimizing perspective. $S_2$ as proposed in~\cite{wang2011soft} is worse in both cases unless the voltage distributions are known. \newt{When more than four reads are allowed, all three schemes perform similarly. After the first four reads, all the strategies have relatively good estimates for the optimal threshold. Subsequent reads are located close to the optimal threshold, achieving small BER. Decoding failure rates are then limited by the channel capacity, rather than by the location of the reads.} 

\section{Conclusion}
\label{s-conclusion}
NAND flash controllers often require several re-reads using different read thresholds to recover host data in the presence of noise. In most cases, the controller tries to guess the noise distribution based on the number of PE cycles and picks the read thresholds based on that guess. However, unexpected events such as excessive leakage or charge trapping can make those thresholds suboptimal. This paper proposed algorithms to reduce the total read time and sector failure rate by using a limited number of re-reads to estimate the noise and improve the read thresholds.

The overall scheme will work as follows. First, the controller will generally have a prior estimation of what a good read threshold might be. It will read at that threshold and attempt a hard-decoding of the information. If the noise is weak and the initial threshold was well chosen, this decoding will succeed and no further processing will be needed. In cases when this first decoding fails, the controller will perform additional reads to estimate the mean and/or variance of the voltage values for each level. These estimates will in turn be used to estimate the minimum achievable BER and the corresponding optimal read threshold. The flash controller then decides whether to perform an additional read with this threshold to attempt hard decoding again, or directly attempt a more robust decoding of the information, for example LDPC, leveraging the reads already performed to generate the soft information.

The paper proposes using a dynamic programming backward recursion to find a policy for progressively picking the read thresholds based on the prior information available and the results from previous reads. This scheme will allow us to find the thresholds that optimize an arbitrary objective. Controllers using hard decoding only (e.g., BCH) may wish to find the read threshold providing minimum BER, while those employing soft decoding (e.g., LDPC) will prefer to maximize the capacity of the resulting channel. The paper provides an approximation for the (symmetric and mismatched) capacity of the channel and presents simulations to illustrate the performance of the proposed scheme in such scenarios.

\bibliographystyle{IEEEtran}
\bibliography{AdaptiveReadThresholds}

\section{Appendix}
\begin{IEEEproof}
\textit{(Theorem 1)}
The proof is very similar to that for Shannon's Channel Coding Theorem, but a few changes will be introduced to account for the mismatched decoder. Let $X\in\{1,2\}^n$ denote the channel input and $Y\in\mathcal{Y}^n$ the channel output, with $X_i$ and $Y_i$ denoting their respective components for $i=1,\ldots,n$. Throughout the proof, $\hat{P}(A)$ will denote the estimate for the probability of an event $A$ obtained using the transition probabilities $\widehat{p_{ij}}$, $i=1,2$, $j=1,\ldots,|\mathcal{Y}|$, to differentiate it from the exact probability $P(A)$ obtained using transition probabilities $p_{ij}$, $i=1,2$, $j=1,\ldots,|\mathcal{Y}|$. The inputs are assumed to be symmetric, so $\hat{P}(X)=P(X)$ and $\hat{P}(X,Y)=\hat{P}(Y|X)P(X)$.

 We start by generating 
 $2^{nR}$ random binary sequences of length $n$ to form a random code $\mathcal{C}$ with rate $R$ and length $n$. 
 After revealing the code $\mathcal{C}$ to both the sender and the receiver, a codeword $\mathbf{x}$ is chosen at random among those in $\mathcal{C}$ and transmitted. The conditional probability of receiving a sequence $\mathbf{y}\in\mathcal{Y}^n$ given the transmitted codeword $\mathbf{x}$ 
 is given by
$P(Y=\mathbf{y}|X=\mathbf{x})=\prod_{i=1}^n p_{x_iy_i}$,
where $x_i$ and $y_i$ denote the $i$-th components of $\mathbf{x}$ and $\mathbf{y}$, respectively.

The receiver then attempts to recover the codeword $\mathbf{x}$ that was sent. However, the decoder does not have access to the exact transition probabilities $p_{ij}$ and must use the estimated probabilities $\widehat{p_{ij}}$ instead. When $p_{ij}=\widehat{p_{ij}}\quad\forall i,j$, the optimal decoding procedure is maximum likelihood decoding (equivalent to maximum {\it a posteriori} decoding, since inputs are equiprobable). In maximum likelihood decoding, the decoder forms the estimate
 $\mathbf{\hat{x}}=\arg\max_{\mathbf{x}\in\mathcal{C}}\hat{P}(\mathbf{y}|\mathbf{x})$,
 where $\hat{P}(Y=\mathbf{y}|X=\mathbf{x})=\prod_{i=1}^n \widehat{p_{x_iy_i}}$ is the estimated likelihood of $\mathbf{x}$, given $\mathbf{y}$ was received.

 Denote by $\hat{A}_\epsilon^{(n)}$ the set of length-$n$ sequences $\{(\mathbf{x},\mathbf{y})\}$ whose estimated empirical entropies are $\epsilon$-close to the typical estimated entropies:
 \begin{eqnarray}\label{e-typical_hat}
\hat{A}_\epsilon^{(n)} & = & \left\{(\mathbf{x},\mathbf{y})\in\{1,2\}^n\times\mathcal{Y}^n:\right.\\
 & & \left|-\frac{1}{n}\log P(X=\mathbf{x})-1\right|<\epsilon,\\
& &\left|-\frac{1}{n}\log \hat{P}(Y=\mathbf{y})-\mu_Y\right|<\epsilon,\\
& &\left.\left|-\frac{1}{n}\log \hat{P}(X=\mathbf{x},Y=\mathbf{y})-\mu_{XY}\right|<\epsilon\right\},
\end{eqnarray}
where $\mu_Y$ and $\mu_{XY}$ represent the expected values of $-\frac{1}{n}\log \hat{P}(Y)$ and $-\frac{1}{n}\log \hat{P}(X,Y)$, respectively, and the logarithms are in base 2. Hence,
\begin{align}
\mu_Y 
& = -\frac{1}{n}\sum_{i=1}^n\sum_{k=1}^{|\mathcal{Y}|}P(Y_i=k)\log \hat{P}(Y_i=k)\\
& = -\sum_{k=1}^{|\mathcal{Y}|}\frac{p_{1k}+p_{2k}}{2}\log\left(\frac{\widehat{p_{1k}}+\widehat{p_{2k}}}{2}\right),\\
\mu_{XY} 
& = -\frac{1}{n}\sum_{i=1}^n\sum_{b=1}^2\sum_{k=1}^{|\mathcal{Y}|}P(X_i=b,Y_i=k)\log \hat{P}(X_i=b,Y_i=k)\\
& = -\sum_{k=1}^{|\mathcal{Y}|}\left(\frac{p_{1k}}{2}\log\left(\frac{\widehat{p_{1k}}}{2}\right)+\frac{p_{2k}}{2}\log\left(\frac{\widehat{p_{2k}}}{2}\right)\right),
\end{align}
where the exact transition probabilities are used as weights in the expectation and the estimated ones are the variable values. Particularly, $(\mathbf{x},\mathbf{y})\in\hat{A}_\epsilon^{(n)}$ implies that $\hat{P}(Y=\mathbf{y}|X=\mathbf{x})>2^{n(1-\mu_{XY}-\epsilon)}$ and $\hat{P}(Y=\mathbf{y})<2^{-n(\mu_{Y}-\epsilon)}$. We will say that a sequence $\mathbf{x}\in\{1,2\}^n$ is in  $\hat{A}_\epsilon^{(n)}$ if it can be extended to a sequence $(\mathbf{x},\mathbf{y})\in \hat{A}_\epsilon^{(n)}$, and similarly for $\mathbf{y}\in\mathcal{Y}^n$.

First we show that with high probability, the transmitted and received sequences $(\mathbf{x},\mathbf{y})$ are in the $\hat{A}_\epsilon^{(n)}$ set. The weak law of large numbers states that for any given $\epsilon>0$, there exists $n_0$, such that for any codeword length $n>n_0$
\begin{align}
P\left(\left|-\frac{1}{n}\log P(X=\mathbf{x})-1\right|\geq\epsilon\right) & < \frac{\epsilon}{3},\\
P\left(\left|-\frac{1}{n}\log \hat{P}(Y=\mathbf{y})-\mu_Y\right|\geq\epsilon\right)& < \frac{\epsilon}{3},\\
P\left(\left|-\frac{1}{n}\log \hat{P}(X=\mathbf{x},Y=\mathbf{y})-\mu_{XY}\right|\geq\epsilon\right) & < \frac{\epsilon}{3}.
\end{align}
Applying the union bound to these events shows that for $n$ large enough, $P\left((\mathbf{x},\mathbf{y})\notin\hat{A}_\epsilon^{(n)}\right)<\epsilon$.

 When a codeword $\mathbf{x}\in\{1,2\}^n$ is transmitted and $\mathbf{y}\in\mathcal{Y}^n$ is received, an error will occur if there exists another codeword $\mathbf{z}\in\mathcal{C}$ such that $\hat{P}(Y=\mathbf{y}|X=\mathbf{z})\geq\hat{P}(Y=\mathbf{y}|X=\mathbf{x})$. The estimated likelihood of $\mathbf{x}$ is greater than $2^{n(1-\mu_{XY}-\epsilon)}$ with probability at least $1-\epsilon$, as was just shown. The other $nR-1$ codewords in $\mathcal{C}$ are independent from the received sequence. For a given $\mathbf{y}\in\hat{A}_\epsilon^{(n)}$, let $S_\mathbf{y}=\left\{\mathbf{x}\in\{1,2\}^n:\hat{P}(Y=\mathbf{y}|X=\mathbf{x})\geq 2^{n(1-\mu_{XY}-\epsilon)}\right\}$ denote the set of input sequences whose estimated likelihood is greater than $2^{n(1-\mu_{XY}-\epsilon)}$. Then
 \begin{align}
 1 & = \sum_{\mathbf{x}\in\{1,2\}^n}\hat{P}(X=\mathbf{x}|Y=\mathbf{y})\\
 & > \sum_{\mathbf{x}\in S_{\mathbf{y}}}\hat{P}(Y=\mathbf{y}|X=\mathbf{x})\frac{P(X=\mathbf{x})}{\hat{P}(Y=\mathbf{y})}\\
 & > |S_\mathbf{y}|2^{n(1-\mu_{XY}-\epsilon)}2^{-n}2^{n(\mu_Y-\epsilon)}
 \end{align}
which implies $|S_\mathbf{y}|<2^{n(\mu_{XY}-\mu_Y+2\epsilon)}$ for all $\mathbf{y}\in\hat{A}_\epsilon^{(n)}$.

If $(\mathbf{x},\mathbf{y})\in\hat{A}_\epsilon^{(n)}$, any other codeword causing an error must be in $S_\mathbf{y}$. Let $E_i$, $i=1,\ldots,nR-1$ denote the event that the $i$-th codeword in the codebook $\mathcal{C}$ is in $S_\mathbf{y}$, and $F$ the event that $(\mathbf{x},\mathbf{y})$ are in $\hat{A}_\epsilon^{(n)}$. The probability of error can be upper bounded by
\begin{align}
P(\mathbf{\hat{x}}\neq\mathbf{x}) & = P(F^c)P(\mathbf{\hat{x}}\neq\mathbf{x}|F^c) + P(F)P(\mathbf{\hat{x}}\neq\mathbf{x}|F)\\
& \leq \epsilon P(\mathbf{\hat{x}}\neq\mathbf{x}|F^c) + \sum_{i=1}^{2^{nR}-1}P(E_i|F)\\
& \leq \epsilon + 2^{nR}|S_\mathbf{y}|2^{-n}\\
& \leq \epsilon + 2^{n(R+\mu_{XY}-\mu_Y-1+2\epsilon)}
\end{align}

Consequently, as long as
\begin{equation}
R < \frac{1}{2}\sum_{k=1}^{|\mathcal{Y}|}\left(p_{1k}\log\left(\widehat{p_{1k}}\right)+p_{2k}\log\left(\widehat{p_{2k}}\right)\right)-(p_{1k}+p_{2k})\log\left(\frac{\widehat{p_{1k}}+\widehat{p_{2k}}}{2}\right),
\end{equation}
for any $\delta>0$, we can choose $\epsilon$ and $n_\epsilon$ so that for any $n>n_\epsilon$ the probability of error, averaged over all codewords and over all random codes of length $n$, is below $\delta$. By choosing a code with average probability of error below $\delta$ and discarding the worst half of its codewords, we can construct a code of rate $R-\frac{1}{n}$ and maximal probability of error below $2\delta$, proving the achievability of any rate below the bound $C_{P,\hat{P}}$ defined in Eq.~(\ref{e-capacityHat}). This concludes the proof.
\end{IEEEproof}
\end{document}